\documentclass[12pt]{article}%

\makeatletter
\def\input@path{{styles/}}
\makeatother

\def\UseBibLatex{1}

\makeatletter
\def\input@path{{styles/}}
\makeatother

\providecommand{\BibLatexMode}[1]{}
\providecommand{\BibTexMode}[1]{}

\renewcommand{\BibLatexMode}[1]{#1}
\renewcommand{\BibTexMode}[1]{}

\ifx\UseBibLatex\undefined%
  \renewcommand{\BibLatexMode}[1]{}
  \renewcommand{\BibTexMode}[1]{#1}
\fi

\BibLatexMode{%
   \usepackage[bibencoding=auto,style=alphabetic,backend=biber]{biblatex}%
   \usepackage{sariel_biblatex}%
}

\usepackage[in]{fullpage}%
\usepackage{amsmath}%
\usepackage{amssymb}%
\usepackage[table]{xcolor}%

\usepackage[amsmath,thmmarks]{ntheorem}%

\usepackage{titlesec}%
\usepackage{xcolor}%
\usepackage{mleftright}%
\usepackage{xspace}%
\usepackage{graphicx}
\usepackage{hyperref}%
\usepackage[inline]{enumitem}
\usepackage{hyperref}%
\usepackage[ocgcolorlinks]{ocgx2}

\hypersetup{%
      unicode,
      breaklinks,%
      colorlinks=true,%
      urlcolor=[rgb]{0.25,0.0,0.0},%
      linkcolor=[rgb]{0.5,0.0,0.0},%
      citecolor=[rgb]{0,0.2,0.445},%
      filecolor=[rgb]{0,0,0.4},
      anchorcolor=[rgb]={0.0,0.1,0.2}%
}

\titlelabel{\thetitle. }%

\theoremseparator{.}%

\theoremstyle{plain}%
\newtheorem{theorem}{Theorem}[section]

\newtheorem{lemma}[theorem]{Lemma}

\theoremstyle{plain}%
\theoremheaderfont{\sf} \theorembodyfont{\upshape}%
\newtheorem*{remark:unnumbered}[theorem]{Remark}%

\newtheorem{defn}[theorem]{Definition}

\newtheorem{problem}[theorem]{Problem}

\theoremheaderfont{\em}%
\theorembodyfont{\upshape}%
\theoremstyle{nonumberplain}%
\theoremseparator{}%
\theoremsymbol{\myqedsymbol}%
\newtheorem{proof}{Proof:}%

\providecommand{\emphind}[1]{}%
\renewcommand{\emphind}[1]{\emph{#1}\index{#1}}

\definecolor{blue25emph}{rgb}{0, 0, 11}

\providecommand{\emphic}[2]{}
\renewcommand{\emphic}[2]{\textcolor{blue25emph}{%
      \textbf{\emph{#1}}}\index{#2}}

\providecommand{\emphi}[1]{}%
\renewcommand{\emphi}[1]{\emphic{#1}{#1}}

\definecolor{almostblack}{rgb}{0, 0, 0.3}

\providecommand{\emphw}[1]{}%
\renewcommand{\emphw}[1]{{\textcolor{almostblack}{\emph{#1}}}}%

\providecommand{\emphOnly}[1]{}%
\renewcommand{\emphOnly}[1]{\emph{\textcolor{blue25}{\textbf{#1}}}}

\newcommand{\myqedsymbol}{\rule{2mm}{2mm}}
\newcommand{\JonathanThanks}[1]{%
   \thanks{%
      Department of Computer Science; %
      University of Illinois; %
      201 N. Goodwin Avenue; %
      Urbana, IL, 61801, USA; %
      \href{mailto:spam@illinois.edu}{jed3@illinois.edu}.%
   #1%
   }%
}

\newcommand{\SarielThanks}[1]{%
   \thanks{%
      Department of Computer Science; %
      University of Illinois; %
      201 N. Goodwin Avenue; %
      Urbana, IL, 61801, USA; %
      \href{mailto:spam@illinois.edu}{sariel@illinois.edu}; %
      \url{http://sarielhp.org/}.%
   #1%
   }%
}

\newcommand{\HLink}[2]{\hyperref[#2]{#1~\ref*{#2}}}
\newcommand{\HLinkSuffix}[3]{\hyperref[#2]{#1\ref*{#2}{#3}}}

\newcommand{\figlab}[1]{\label{fig:#1}}
\newcommand{\figref}[1]{\HLink{Figure}{fig:#1}}

\newcommand{\thmlab}[1]{{\label{theo:#1}}}
\newcommand{\thmref}[1]{\HLink{Theorem}{theo:#1}}

\providecommand{\deflab}[1]{\label{def:#1}}

\newcommand{\defrefY}[2]{\hyperref[def:#2]{#1}}

\newcommand{\seclab}[1]{\label{sec:#1}}
\newcommand{\secref}[1]{\HLink{Section}{sec:#1}}

\newcommand{\lemlab}[1]{\label{lemma:#1}}
\newcommand{\lemref}[1]{\HLink{Lemma}{lemma:#1}}%

\providecommand{\eqlab}[1]{}%
\renewcommand{\eqlab}[1]{\label{equation:#1}}

\providecommand{\remove}[1]{}%
\newcommand{\Set}[2]{\left\{ #1 \;\middle\vert\; #2 \right\}}

\newcommand{\pth}[1]{\mleft(#1\mright)}%

\newcommand{\ProbC}{{\mathbb{P}}}
\newcommand{\ExC}{{\mathbb{E}}}

\newcommand{\Prob}[1]{\ProbC\mleft[ #1 \mright]}
\newcommand{\Ex}[1]{\ExC\mleft[ #1 \mright]}

\newcommand{\ceil}[1]{\mleft\lceil {#1} \mright\rceil}

\newcommand{\cardin}[1]{\left\lvert {#1} \right\rvert}%

\renewcommand{\th}{th\xspace}

\renewcommand{\Re}{\mathbb{R}}%

\newlist{compactenumA}{enumerate}{5}%
\setlist[compactenumA]{topsep=0pt,itemsep=-1ex,partopsep=1ex,parsep=1ex,%
   label=(\Alph*)}%

\newlist{compactenuma}{enumerate}{5}%
\setlist[compactenuma]{topsep=0pt,itemsep=-1ex,partopsep=1ex,parsep=1ex,%
   label=(\alph*)}%

\newlist{compactenumI}{enumerate}{5}%
\setlist[compactenumI]{topsep=0pt,itemsep=-1ex,partopsep=1ex,parsep=1ex,%
   label=(\Roman*)}%

\newlist{compactenumi}{enumerate}{5}%
\setlist[compactenumi]{topsep=0pt,itemsep=-1ex,partopsep=1ex,parsep=1ex,%
   label=(\roman*)}%

\newlist{compactitem}{itemize}{5}%
\setlist[compactitem]{topsep=0pt,itemsep=-1ex,partopsep=1ex,parsep=1ex,%
   label=\ensuremath{\bullet}}%

\numberwithin{figure}{section}%
\numberwithin{table}{section}%
\numberwithin{equation}{section}%

\DeclareFontFamily{U}{BOONDOX-calo}{\skewchar\font=45 }
\DeclareFontShape{U}{BOONDOX-calo}{m}{n}{
  <-> s*[1.05] BOONDOX-r-calo}{}
\DeclareFontShape{U}{BOONDOX-calo}{b}{n}{
  <-> s*[1.05] BOONDOX-b-calo}{}
\DeclareMathAlphabet{\mathcalb}{U}{BOONDOX-calo}{m}{n}
\SetMathAlphabet{\mathcalb}{bold}{U}{BOONDOX-calo}{b}{n}
\DeclareMathAlphabet{\mathbcalb}{U}{BOONDOX-calo}{b}{n}

\usepackage{stmaryrd}%

\newcommand{\IntRange}[1]{\left\llbracket #1 \right\rrbracket}
\newcommand{\IRX}[1]{\IntRange{#1}}
\newcommand{\IRY}[2]{\IntRange{#1:#2}}

\newcommand{\TwoCeil}[1]{\ceil{\!\!\ceil{ #1 }\!\!}}

\newcommand{\eps}{{\varepsilon}}%

\newcommand{\rect}{\Mh{\mathcalb{r}}}%
\newcommand{\qrect}{\Mh{\mathcalb{q}}}%

\newcommand{\lN}{\mathcalb{h}}

\newcommand{\cGrid}{\mathcalb{c}_1}

\renewcommand{\P}{\Mh{P}}%
\newcommand{\Q}{\Mh{Q}}%

\renewcommand{\S}{\Mh{S}}%
\newcommand{\etal}{\textit{et~al.}\xspace}

\newcommand{\Term}[1]{\textsf{#1}}

\newcommand{\LCA}{\Term{LCA}\xspace}%

\newcommand{\polylog}{\mathrm{polylog}}%

\newcommand{\Patrascu}{P{\u{a}}tra{\c{s}}cu\xspace}

\providecommand{\Mh}[1]{#1}%
\newcommand{\rankX}[1]{\mathrm{rank}\pth{#1}}
\newcommand{\rankY}[2]{\mathrm{rank}\pth{#1, #2}}
\providecommand{\TPDF}[2]{\texorpdfstring{#1}{#2}}
\newcommand{\areaX}[1]{\mathrm{area}\pth{#1}}
\newcommand{\RectSet}{\mathcal{R}}%

\BibLatexMode{%
   \bibliography{rect_empty}
}

\begin{document}

\title{Orthogonal Emptiness Queries for Random Points}

\author{%
   Jonathan E. Dullerud%
   \JonathanThanks{}%
   \and%
   Sariel Har-Peled%
   \SarielThanks{%
      Work on this paper was partially supported by NSF AF award
      CCF-2317241.  }%
}

\maketitle

\begin{abstract}
    We present a data-structure for orthogonal range searching for
    random points in the plane. The new data-structure uses (in
    expectation) $O\bigl(n \log n ( \log \log n)^2 \bigr)$ space, and
    answers emptiness queries in constant time. As a building block,
    we construct a data-structure of expected linear size, that can
    answer predecessor/rank queries, in constant time, for random
    numbers sampled uniformly from $[0,1]$.

    While the basic idea we use is known \cite{d-lnba-86}, we believe
    our results are still interesting.
\end{abstract}

\section{Introduction}

Orthogonal range searching involve preprocessing a set $S$ of $n$
points in $\Re^d$, such that given a query axis-aligned box, it
quickly decides whether there are any points of $S$ in this box (or
similar queries, such as reporting or counting all such points).
There is vast literature on the topic, see the survey by Agarwal
\cite{a-rs-04}.

Here, we are interested (mostly) in the one and two dimensional cases,
for points randomly sampled from the unit square.

\subsection*{Our results.}

\paragraph{Rank (translation).}

Given a set $\S$ of $n$ numbers in $\Re$, a natural problem is to
preprocess it to answer predecessor queries on it.  For a number $x$,
its \emphw{predecessor} in $\S$ is the largest number $y \in S$, such
that $y \leq x$.  In general, any data structure with
$O(n\, \polylog n)$ space, has a predecessor query time
$\Omega( \log \log n)$ \cite{pt-tstop-06}. We are interested here in a
related problem that can be solved using predecessor search --
specifically, \emphw{rank translation} -- specifically, given a query
number $x$, we want to quickly compute the number of elements in $\S$
that are smaller (or equal) to it. That is, the task is to compute the
\emphw{rank} of $x$:
$\rankY{x}{\S} = \cardin{\Set{s \in \S}{s \leq x}}$.

We show that rank translation (and thus predecessor queries) can be
performed in constant time using (expected) linear space, for $n$
numbers chosen randomly (say from $[0,1]$). This is better than the
lower bound of $\Omega( \log \log n)$ of \Patrascu and Thorup
\cite{pt-tstop-06}. This follows by a two level scheme that is
somewhat reminiscent of Fredman \etal \cite{fks-sstow-84} hashing
scheme. See \secref{rank:queries} for details.

\paragraph{Rank space queries in constant time.}

The above rank translation data-structure, enables us to consider the
input of $n$ random points (that are picked uniformly from the unit
square), are in rank space.

\newcommand{\rankedY}[2]{\mathcalb{r}\pth{#1, #2}}%
\newcommand{\rankedX}[1]{{#1}_\mathcal{R}}%

\begin{defn}
    Let $\P$ be a set of $n$ points in $\Re^d$ with all numbers used
    being distinct. For $i=1,\ldots, d$, let
    $\P_{:i} = \Set{p_i}{(p_1, \ldots, p_d) \in \P}$ denote the
    \emphw{$i$\th slice} of $\P$. The a point
    $p = (p_1, \ldots, p_d) \in \Re^d$, its \emphw{rank} is the
    integral point $\rankedY{\P}{p} = (i_1, \ldots, i_d)$, where
    $i_k = \rankY{p_i}{\P_{:i}}$, for $i=1,\ldots,d$. Thus, the point
    set $\P$ in \emphw{rank space} is the set
    $\rankedX{\P} = \Set{\rankedY{\P}{p}}{p \in \P}$.
\end{defn}

Specifically, if $\P \subseteq \Re^2$ is a set of $n$ points in
general position (i.e., no two points share a coordinate), then
$\rankedX{\P}$ can be represented as a permutation
$\pi \IRX{n} \mapsto \IRX{n}$, where $\IRX{n} = \{1,\ldots, n\}$.

In \secref{rank:space}, we show that a set of $n$ points in rank
space (in 2d), and a prespecified $\eps \in (0,1)$, can be
preprocessed in $O(n^{1+\eps})$ space/time for emptiness queries,
where the emptiness query takes $O(1/ \eps)$ time, using a recursive
high fan-out tree construction on the $x$-axis, followed by a
secondary data-structure on the $y$-axis.

Given a query, the data
structure recursively constrains the query using a tree built upon the
vertical slabs described before. We also find that emptiness quadrant
queries can be pre-processed in $O(n)$ space/time and decided in
$O(1)$ time. This is done using a data structure which looks at the
"minima" curve of a point set, and then checking if the point in the
minima corresponding to the x-coordinate of the quadrant query is
dominated by the y-coordinate of the quadrant query.

Our data structure is comparable to the current fastest pre-processed
data structure of Chan, Larsen and \Patrascu \cite{clp-orsrr-11},
which has pre-processing space/time $O(n \log \log n)$ and has query
time $O( \log \log n)$. Notice also that the emptiness quadrant case
is faster in both pre-processing and query time.

\paragraph{Random Point Queries}

In the case where points are sampled from a random uniform point
distribution, it is clear that one can simply inherit the data
structure from \cite{clp-orsrr-11}, but in fact we show a reduction
that uses the data structure for quadrant queries (Lemma 4.4), as well
as constant $d$-dimensional rank queries (Lemma 3.3). First in Section
5.1, we employ a grid-bucketing method that includes multiple tilings
of the unit square , where each tile is of $\Theta(\log n/n)$
area. That is, each tile "set" is assigned a tile "type" which is
translated to make up the entire unit square, each of differing first
and second dimension (i.e., a tile of type
$[0,\log n/n] \times [0, 1]$ versus a tile of type
$[0,4 \log n/n] \times [0, 1/4]$ ). By eliminating the need to process
"large" query rectangles by the $\eps - $net theorem
\cite{hw-ensrq-87}, we design a data structure that can solve queries
of "small" rectangles which can only intersect a constant number of
rectangles in some grid (the way this particular grid set is chosen is
explained in Section 5.1). We find that one can build this primary
data structure in $O(n \log^{1+\eps} n)$ expected time and space given
a fixed constant $\eps$, which can solve emptiness queries in $O(1)$
time with high probability.

In Section 5.2, we establish a space reduction on individual
rectangles using a 3-level range search tree. This data structure
operates on an individual rectangle such that given an x and y
interval (from a query rectangle), the tree can constrain the query to
four quadrants. By using the quadrant query data structure described
in Section 4.1, one can decide if any of the four quadrants are
empty. This data structure can be pre-processed in $O( n\log^2 n)$
time and space. and decide queries in $O(1)$ time. Then applying this
structure to each bucket in the data structure described in Section 4,
we find that one can pre-process $n$ random points in expected
$O\bigl(n \log n (\log \log n)^2 \bigr)$ expected pre-processing
time/space and expected $O(1)$ query time.

A summary of our results is presented in \figref{results}.

\begin{figure}[h]
    \centering
    \begin{tabular}{|c|c|c|}
      \multicolumn{3}{c}{2d rank space}\\
      \hline
      Space
      &
        Query time
      &
        Ref\\
      \hline
      $O(n \log \log n)\Bigr.$
      &
        $O( \log \log n)$
      &
        \cite{clp-orsrr-11}
      \\
      $O(n^{1+\eps})\Bigr.$
      &
        $O( 1/\eps)$
      &
        \lemref{n:1:eps}
      \\
      \hline
      \multicolumn{3}{c}{}\\
      \multicolumn{3}{c}{2d random points}\\
      \hline%
      $O(n \log^{1+\eps} n)\Bigr.$ & $O(1/\eps)$ & \thmref{main:1}\\
      $O\bigl(n \log n (\log \log n)^2 \bigr)\Bigr.$ & $O(1)$ & \thmref{main:2}\\
      \hline
    \end{tabular}
    \caption{Results on orthogonal range searching in 2d in rank space
       and random points.}%
    \figlab{results}%
\end{figure}

\paragraph{Previous work.}

The basic idea about predecessor search is already known, see Devroye
\cite{d-lnba-86}.  There is also associated work interpolation search
and how to make it dynamic \cite{mt-dis-93} which is relevant.
Nevertheless, we believe our writeup is still interesting. In
particular, it seems likely one can further improve the space used by
our data-structure in the 2d case.

\section{Preliminaries}
\seclab{prelims}

For integers $i<j$, let $\IRY{i}{j} = \{ i, i+1, \ldots, j\}$, and let
$\IRX{i} = \IRY{1}{i} = \{1,\ldots,i \}$.

\begin{defn}
    For a set $\S$ of $n$ distinct real numbers, the \emphi{rank} of a
    number $x \in \Re$, is
    $\rankY{x}{\S} = \cardin{\Set{s \in S}{s \leq x}}$.
\end{defn}

\paragraph{Rectangles.}
For technical reasons it is easier to work with semi-open
rectangles. All the rectangles we work with are axis-aligned.
\begin{defn}
    A \emphi{rectangle} is the semi-open set of the form
    $(x,x'] \times (y,y']$, for any real numbers $x,x',y,y'$.
\end{defn}

Let $\P$ be a set of points in the plane. Consider the natural
translation of this point set into \emphi{rank space}:
$\rankX{\P} = \Set{ \rankY{p}{\P}}{p \in \P}$. If the points of $\P$
all have distinct coordinates, then $\rankX{\P}$ can be viewed as a
permutation of $\IRX{n}$.

\begin{defn}
    \deflab{rank:translation}%
    For a point set $\P$, and a rectangle
    $\rect = (x,x'] \times (y,y']$, its \emphi{rank translation} is
    the version of $\rect$ in the rank space:
    $\rankY{\rect}{\P} = (\alpha, \alpha'] \times (\beta,\beta']$,
    where $(\alpha, \beta) = \rankY{\bigl.(x,y)}{ \P}$ and
    $(\alpha', \beta') = \rankY{\bigl.(x',y')}{ \P}$.
\end{defn}

\begin{problem}
    The input is a set $\P$ of $n$ points randomly, uniformly and
    independently sampled from the unit square. Our problem is to
    construct a data structure which can decide whether an
    axis-aligned query rectangle contains at least one point of $\P$.
\end{problem}

\section{Rank queries for random numbers and points}
\seclab{rank:queries}

\paragraph{Rank queries.}
Let $\S$ be a set of $n$ numbers uniformly and independently sampled
from the interval $[0,1]$. The task is to construct a data structure
for rank queries on $\S$.  constant time.

\paragraph{Construction.}

Consider throwing $m$ points uniformly into the interval $[0,1]$, and
let $\alpha$ be an small integer. We partition the interval $[0,1]$
into $m^\alpha$ equal length subintervals (i.e., bins). We consider a
bad event, when (say) $5$ (or more) points fall into the same bin.

\begin{lemma}
    \lemlab{top:level}%
    Let $\S$ be a set of $m$ points chosen uniformly and independently
    from an internal $I$. Then, one can construct a data-structure of
    expected size $O(m^2)$, in $O(m^2)$ expected time, such that one
    can report the rank of a query $x \in I$ in $O(1)$ time.
\end{lemma}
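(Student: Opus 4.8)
The plan is to lay down a single, very fine layer of bins on $I$, precompute a prefix count at each bin, and finish a query by a short local scan inside one bin --- rebuilding with more bins only in the rare event that some bin gets overloaded. This is the same ``locate, then finish locally'' idea sketched in the construction paragraph, collapsed to one level.

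\textbf{Construction.} Let $w$ be the length of $I$ and cut $I$ into $N = m^2$ equal-length semi-open bins $B_1, \dots, B_N$; for any $z \in I$ the index of the bin containing $z$ is a single division followed by a ceiling, hence computed in $O(1)$ time. At each bin $B_j$ I store the list $L_j$ of points of $\S$ lying in $B_j$ (at most four of them, see below) together with an offset $o_j$ equal to the number of points of $\S$ strictly to the left of $B_j$; equivalently $o_j = \rankY{\ell_j}{\S}$, where $\ell_j$ is the left endpoint of $B_j$. Bucketing the $m$ points and recording the lists $L_j$ takes $O(m)$ time, and the offsets $o_j$ are then obtained by one prefix-sum sweep over the $N$ bins in $O(N) = O(m^2)$ time, so the structure has size $O(N + m) = O(m^2)$ and is built in $O(m^2)$ time.

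\textbf{Query.} Given $x \in I$, locate its bin $B_j$ in $O(1)$ time. Every point of $\S$ in $B_1,\dots,B_{j-1}$ is $\le x$ and every point in $B_{j+1},\dots,B_N$ exceeds $x$, so $\rankY{x}{\S} = o_j + \cardin{\Set{s \in L_j}{s \le x}}$, and the second term is read off by scanning the short list $L_j$. This costs $O(1)$ provided every $L_j$ has $O(1)$ length, and the semi-open convention on the bins keeps the boundary cases consistent.

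\textbf{Overloaded bins, and the main thing to verify.} Call an attempt a \emph{failure} if some bin holds five or more points of $\S$. Since the $m$ points are independent and each lands in a fixed bin with probability $1/N$, a union bound over bins and over $5$-element subsets gives $\Prob{\text{failure}} \le N\binom{m}{5}N^{-5} \le m^{5}/N^{4} = m^{-3}$. On failure, discard everything and rebuild with $N$ replaced by $N^2$; on the $i$-th attempt the number of bins is $m^{2^i}$ and the failure probability is at most $m^{\,5-2^{i+2}}$, which decays fast enough to dominate the growth of the bin count. Hence, using $\Prob{\text{attempts} \ge i+1} \le \Prob{\text{attempt }i\text{ fails}}$, the expected number of attempts is $1+o(1)$ and the expected size and construction time telescope to $\sum_{i\ge 1}\Prob{\text{attempts}\ge i}\cdot O(m^{2^i}) = O(m^2)$; and since we keep only a structure with no overloaded bin, each $L_j$ has at most four entries and the query bound is worst-case $O(1)$. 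The only genuinely probabilistic part of the argument is this last paragraph --- everything else is deterministic and routine --- so the ``hard part'' is merely arranging the rebuild so the bin count grows slowly enough (but the failure probability fast enough) that the expected cost still sums to $O(m^2)$. (If one only wants $O(1)$ \emph{expected} query time for a fixed query, the rebuild can be dropped, since the bin containing a fixed $x$ holds a binomial number of points with mean $1/m$.)
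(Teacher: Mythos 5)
Your proof is correct and follows essentially the same route as the paper: equal-length bins with prefix-sum offsets and constant-time bin location, declaring failure when some bin receives five or more points, and bounding the expected space by summing (failure probability) $\times$ (bin count) over the escalation levels. The only difference is cosmetic --- you start at $m^2$ bins and square the bin count on each rebuild, while the paper increments the exponent ($m^\alpha$ for $\alpha = 1, 2, 3, \dots$) --- and both schedules give the same $O(m^2)$ expected bound dominated by the first nontrivial level.
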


\begin{proof}
    For an $\alpha > 0$ an integer, the basic scheme is to partition
    $I$ into $m^\alpha$ (equal length) subintervals (i.e., bins), such
    that no bin contains more than $4$ points of $\S$. Clearly, for
    $\alpha$ sufficiently large this happens with good probability, as
    we show below. Assume that this in indeed the case, and observe
    that this readily leads to a data-structure that can resolve rank
    queries in constant time. Indeed, sort the points of $\S$ into
    their $m^\alpha$ bins (using radix sort in $O(m)$ time, say), and
    now compute for each bin how many points fell into it (again, this
    is at most $4$). Now compute the prefix sums for the sizes of the
    bins. Given a query $x \in I$, we compute in constant time (using
    the floor function, say), the bin containing $x$, use the prefix
    sum computed for the previous bin, together with the rank of $x$
    among the (four) numbers stored in this bin, to compute the rank
    of $x$. This takes $O(1)$ time, and requires $O(m^\alpha)$ space.

    We next the expected space used by this scheme.  The probability
    that two specific balls (when distributed to their respective
    bins) collide is $p_\alpha =1/m^\alpha$. Let $X_\alpha$ be the
    number of $5$ tuple of balls that collide. We have that
    \begin{equation*}
        \beta_\alpha
        =%
        \Ex{X_\alpha}%
        =%
        \binom{m}{5} p_\alpha^4 \leq%
        m^5 \frac{1}{m^{4\alpha}}.
    \end{equation*}
    The probability that any bin contains more than $4$ balls is thus,
    by Markov's inequality, at most
    $\Prob{X_\alpha \geq 1 } \leq \Ex{X_\alpha} \leq \beta_\alpha$.
    Let $Z$ be the minimum value of $\alpha$, such that $\S$ contains
    no bin with more than $4$ points. We have that
    $\Prob{Z \geq i} \leq \Prob{X_{i-1} \geq 1 } \leq \min( 1,
    m^{5-4(i-1)})$.  The expected space used by the above
    data-structure is thus bounded by $O(\Delta)$, where
    \begin{equation*}
        \Delta
        \leq%
        \sum_{i=1}^{\infty} \Prob{Z\geq i} m^i%
        \leq%
        m + m^2 +
        \sum_{i=3}^\infty m^i \cdot m^{5-4(i-1)}
        \leq%
        m + m^2 +
        \sum_{i=0}^\infty \frac{1}{m^{4i}}
        \leq
        2m^2.
    \end{equation*}
\end{proof}

\begin{lemma}
    \lemlab{1:rank}%
    Let $\S$ be a set of $n$ points chosen uniformly and independently
    from an internal $I \subseteq \Re$. Then, one can construct a
    data-structure of expected size $O(n)$, in $O(n)$ expected time,
    such that one can report the rank of a query $x \in I$ in $O(1)$
    time.
\end{lemma}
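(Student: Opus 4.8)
The plan is a two-level bucketing scheme in the spirit of the hashing construction of Fredman \etal \cite{fks-sstow-84}, using \lemref{top:level} as the second level. First, partition $I$ into $n$ equal-length subintervals $I_1,\ldots,I_n$, and let $n_i = \cardin{\S \cap I_i}$ be the load of the $i$\th bin. Bucketing the points of $\S$ into these bins takes $O(n)$ time via the floor function; store the bin loads together with the prefix sums $\sigma_j = \sum_{k<j} n_k$. Conditioned on $n_i$, the $n_i$ points of $\S$ landing in $I_i$ are uniform and independent in $I_i$, so we may build the data-structure of \lemref{top:level} on each bin with $m = n_i$; this is the \emph{secondary structure} of bin $I_i$ (bins with $n_i \le 1$ need nothing).

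To answer a query $x \in I$: compute in $O(1)$ time the index $j$ with $x \in I_j$; every point in $I_1,\ldots,I_{j-1}$ is $\le x$ and every point in $I_{j+1},\ldots,I_n$ exceeds $x$, so $\rankY{x}{\S} = \sigma_j + \rankY{x}{\S \cap I_j}$, and the second term is returned in $O(1)$ time by the secondary structure of $I_j$. Thus queries take $O(1)$ time.

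The crux is the expected space (and preprocessing) bound. By \lemref{top:level}, conditioned on the load $n_i$, the secondary structure of $I_i$ has expected size and build time $O(n_i^2)$; hence the total is $O\bigl(\Ex{\sum_{i=1}^{n} n_i^2}\bigr)$ plus the $O(n)$ for the top-level arrays. Each $n_i$ is $\mathrm{Bin}(n,1/n)$, so $\Ex{n_i^2} = \Var{n_i} + \Ex{n_i}^2 \le 1 + 1 = 2$, and linearity of expectation gives $\Ex{\sum_i n_i^2} \le 2n$, independent of the correlations among the loads. Therefore the structure has expected size $O(n)$ and expected construction time $O(n)$. The only subtlety to be careful about is exactly this composition of expectations: one applies \lemref{top:level} conditionally on each $n_i$ (legitimate, since the within-bin points are then i.i.d.\ uniform), and then takes the outer expectation over $(n_1,\ldots,n_n)$, where linearity lets us ignore that the $n_i$ are not independent.
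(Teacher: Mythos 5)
Your proof is correct and follows essentially the same route as the paper: a two-level scheme with $n$ equal-length bins, prefix sums at the top level, and the data-structure of \lemref{top:level} inside each bin, with the space bounded via $\Ex{\sum_i n_i^2} = O(n)$. The only cosmetic difference is that you bound $\Ex{n_i^2}$ via the binomial variance while the paper sums $i^2/i!$; your explicit remark about conditioning on the bin loads before invoking \lemref{top:level} is a nice touch the paper leaves implicit.
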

\begin{proof}
    The idea is to apply a two-level scheme -- in the first level, we
    use $n$ bins, as we have $n$ numbers, and then in the second
    level, for each bin containing five or more points, we use the
    data-structure of \lemref{top:level}.

    Sorting the points into the bins can be done in $O(n)$ time using
    radix-sort together with the floor function.
    Let $n_i$ be the number of points falling into the $i$\th bin in
    the top level. We have that the expected construction time (and
    space) is
    \begin{math}
        \Delta = O( n + \sum_{i=1}^{n} n_i^2 ).
    \end{math}
    To bound this quantity (in expectation), let $Y$ be the number of
    points falling into the first bin. We have that
    $Y \sim \mathrm{Bin}(n, 1/n)$. Thus,
    $\Prob{Y=i} \leq \binom{n}{i} \frac{1}{n^i} \leq
    \frac{1}{i!}$. Thus, we have
    \begin{math}
        \Ex{Y^2}
        =%
        \sum_{i=1}^\infty \frac{1}{i!} i^2
        =%
        O(1).
    \end{math}
    This readily implies that

    more than $5$, balls we use the
    above scheme. This implies that, for all $i$, we have
    $\Ex{O(n_i^2)} = O(1)$, and thus, the total space used by the
    data-structure is $O(\Delta ) = O(n)$.

    As for answering a query, we pre-compute prefix sums on the bins
    in the top level. Given a query it is thus sufficient to resolve
    its rank inside its bin, which can be done in $O(1)$ time, by
    \lemref{top:level}.
\end{proof}

\paragraph*{Rank queries in higher dimensions.}

An immediate consequence of the above, is that if $\P$ is a set of $n$
points picked randomly (uniformly and independently) from $[0,1]^d$,
then we can answer rank queries on $\P$ in $O(1)$ time.

As a reminder,
for a set of points $\P$ in $\Re^d$, and a query point $q \in \Re^d$,
a \emphw{rank query} returns an integral point $(i_1,\ldots, i_d)$,
where $i_k$ is the rank of $q_k$ among the $n$ numbers forming the
$k$\th coordinate of the points of $P$. We thus immediately get the
following:

\begin{lemma}
    \lemlab{d:rank}%
    Let $\P$ be a set of $n$ points chosen uniformly and independently
    from $[0,1]^d$. Then, one can construct a data-structure of
    expected size $O(n)$, in $O(n)$ expected time, such that one can
    answer rank queries on $\P$ in $O(1)$ time.
\end{lemma}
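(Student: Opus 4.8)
The plan is to reduce the $d$-dimensional rank query to $d$ independent one-dimensional rank queries and invoke \lemref{1:rank} once per coordinate. The key observation is that a rank query on a point set $\P \subseteq [0,1]^d$ is, by definition, nothing more than a tuple of $d$ separate one-dimensional rank queries: the $k$\th coordinate of the answer depends only on the multiset $\P_{:k}$ of $k$\th coordinates of the points of $\P$, and on the $k$\th coordinate $q_k$ of the query. So first I would make this decomposition explicit: for each $k \in \IRX{d}$, form the set $\P_{:k} = \Set{p_k}{p \in \P}$ of the $k$\th coordinates. Then I would observe that since the points of $\P$ are sampled uniformly and independently from $[0,1]^d$, each $\P_{:k}$ is (on its own, as a set) a collection of $n$ numbers sampled uniformly and independently from $[0,1]$ — the coordinates of a single uniform point in the cube are themselves independent uniform on $[0,1]$, and points are mutually independent, so within a fixed coordinate $k$ the $n$ values are i.i.d.\ uniform.

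Next I would build, for each coordinate $k = 1, \ldots, d$, the one-dimensional rank data-structure of \lemref{1:rank} on $\P_{:k}$. By that lemma, each such structure has expected size $O(n)$ and is built in $O(n)$ expected time, and answers a one-dimensional rank query in $O(1)$ time. Summing over the $d$ coordinates, the total expected size is $O(dn) = O(n)$ and the total expected construction time is $O(dn) = O(n)$, treating $d$ as a constant (as is standard in this setting). To answer a rank query for $q = (q_1, \ldots, q_d)$, I would query the $k$\th structure with $q_k$ to get $i_k = \rankY{q_k}{\P_{:k}}$ in $O(1)$ time, and return $(i_1, \ldots, i_d)$; this is $O(d) = O(1)$ total.

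The main thing to be careful about — the only place where there is anything to check at all — is the independence/distribution claim in the second step: one must confirm that the $n$ numbers making up a single coordinate slice $\P_{:k}$ genuinely satisfy the hypothesis of \lemref{1:rank} (i.e.\ $n$ i.i.d.\ uniform samples from an interval), so that the expected-size and expected-time bounds there apply. This is immediate from the product structure of the uniform distribution on $[0,1]^d$, but it is worth stating, since the $d$ slices are of course not independent of each other as a family — fortunately that does not matter, because linearity of expectation lets us add the per-coordinate expected costs without any independence across $k$. Beyond that, the argument is a routine bookkeeping of $d$ parallel copies of the one-dimensional construction, and there is no real obstacle.
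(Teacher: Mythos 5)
Your proposal matches the paper's argument exactly: the paper also treats \lemref{d:rank} as an immediate consequence of \lemref{1:rank}, obtained by building one copy of the one-dimensional structure per coordinate slice and combining the $d$ constant-time answers. The independence/distribution check you flag is the right (and only) point of care, and your handling of it via linearity of expectation is correct.
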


\section{Emptiness queries in rank space}
\seclab{rank:space}

Here, we describe how to do emptiness queries in rank space in
constant time for points in the plane. The following is by now
standard (i.e., a tree with high fan-out), and the full description is
included here for the sake of completeness.

\subsection{Rectangle queries}

Let $\P$ be a set of $n$ points in rank space. That is,
$\P = \{p_1, \ldots, p_n\}$, with $p_i = (i,y_i)$, where
$i, y_i \in \IRX{n}$, and no two points of $\P$ have the same
$y$-coordinate. We are interested in answering emptiness rectangle
queries, where the rectangles are of the form
$(i_1,i_2] \times (j_1, j_2)$, where $i_1,i_2,j_1,j_2 \in \IRY{0}{n}$.

\paragraph{The easy solution: \TPDF{$O(n^4)$}{n^4} space.}

By recomputing the query result over all possible queries, one readily
gets the following\footnote{Getting the $O(n^4)$ preprocessing time
   requires more work, but since we present a better result shortly,
   we omit the interesting but irrelevant details.}.
\begin{lemma}
    Let $\P$ be a set of $n$ points in rank space with unique
    coordinates. One can preprocess $\P$ in $O(n^4)$ time and space,
    so that rectangle emptiness queries are answered in constant time.
\end{lemma}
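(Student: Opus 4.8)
The plan is to build, for every pair of $x$-rank boundaries, a sorted table of the $y$-coordinates that occur in that $x$-slab, so that a query reduces to an array lookup. Concretely, I would do the following. For each pair $(i_1, i_2)$ with $0 \le i_1 \le i_2 \le n$, consider the vertical slab $S_{i_1,i_2}$ consisting of the points $p_{i_1+1}, \ldots, p_{i_2}$; there are $O(n^2)$ such pairs. For each slab, sort its $y$-coordinates and, for each pair $(j_1,j_2)$ with $0 \le j_1 \le j_2 \le n$, record a single bit: whether any point of $S_{i_1,i_2}$ has $y$-coordinate in the open interval $(j_1,j_2)$. That is a table $T$ indexed by the four-tuple $(i_1,i_2,j_1,j_2) \in \IRY{0}{n}^4$, so $O(n^4)$ entries, each one bit, giving $O(n^4)$ space. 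A query rectangle $(i_1,i_2] \times (j_1,j_2)$ is then answered by reading $T[i_1,i_2,j_1,j_2]$ in $O(1)$ time — exactly the semi-open-in-$x$, open-in-$y$ form the paper uses — so correctness is immediate from the definition of $T$.

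For the preprocessing time, I would fill $T$ by a straightforward nested sweep rather than recomputing each entry from scratch. Fix $i_1$; as $i_2$ increases from $i_1$ to $n$ we add one point at a time to the slab, maintaining a Boolean array $B[0..n]$ where $B[y]$ records whether $y$ is currently an occupied $y$-coordinate. Having updated $B$ for the current slab $S_{i_1,i_2}$, compute prefix counts $C[j] = \cardin{\set{\text{occupied } y \le j}}$ in $O(n)$ time, and then set $T[i_1,i_2,j_1,j_2] = [\,C[j_2-1] - C[j_1] > 0\,]$ for all $j_1 \le j_2$; here I use $j_2-1$ because the $y$-interval is open at $j_2$ (and for integer coordinates $(j_1,j_2)$ containing no point is equivalent to $(j_1, j_2-1]$ containing no point). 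Filling the $(j_1,j_2)$ block for one fixed $(i_1,i_2)$ costs $O(n^2)$, and there are $O(n^2)$ choices of $(i_1,i_2)$, so total time is $O(n^4)$, matching the claimed bound. (This is the "more work" the paper's footnote alludes to; since a better structure is coming, one could equally just cite the $O(n^4)$-space claim and wave at the time bound.)

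I do not expect a genuine obstacle here — the statement is deliberately a warm-up — but the one place to be careful is the indexing convention: the $x$-side of the query is half-open $(i_1,i_2]$ while the $y$-side is fully open $(j_1,j_2)$, and the coordinates are integers in $\IRX{n}$, so the slab for $(i_1,i_2]$ is "points with $x$-rank in $\set{i_1+1,\ldots,i_2}$" and a point with integer $y$-coordinate lies in $(j_1,j_2)$ iff its $y$-rank is in $\set{j_1+1,\ldots,j_2-1}$. Getting these endpoints exactly right in the definition of $T$ (and in the prefix-sum lookup) is the whole content of the proof; everything else is bookkeeping. I would state the construction, note correctness follows by unwinding these definitions, and defer the time analysis to the footnote's promise that a stronger bound supersedes it.
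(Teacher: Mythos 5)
Your proposal is correct and takes exactly the approach the paper intends: tabulate the answer for every four-tuple of rank boundaries and answer queries by a single lookup, with the sweep over $(i_1,i_2)$ plus per-slab prefix sums supplying the $O(n^4)$ preprocessing time that the paper's footnote waves at but omits. Your care with the half-open versus open endpoint conventions is the only substantive detail, and you handle it correctly.
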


\paragraph{Reducing the space.}

For integers $\alpha \leq \beta$, let
$\Q = \P \sqcap [\alpha,\beta] = \Set{ (i,j) \in \P}{ \alpha \leq i
   \leq \beta }$. A rectangle $\qrect = (i_1,i_2] \times (j_1, j_2]$
\emphi{crosses} $\Q$ if $i_1 < \alpha$ and $i_2 \geq \beta$.
\begin{lemma}
    \lemlab{crossing}%
    Given a set $\P$ of $n$ points in rank space, and an interval
    $[\alpha,\beta]$, one can preprocess
    $\Q = \P \sqcap [\alpha,\beta]$, using $O(n)$ space and time, such
    that given a crossing rectangle $\rect$, one can decide if $\rect
    \cap \Q$ is empty in $O(1)$ time.
\end{lemma}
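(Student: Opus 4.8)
The key structural observation is that a crossing rectangle $\rect = (i_1, i_2] \times (j_1, j_2]$ with $i_1 < \alpha$ and $i_2 \geq \beta$ contains no constraint in the $x$-direction that is relevant to $\Q$: every point of $\Q$ has its $x$-coordinate in $[\alpha, \beta] \subseteq (i_1, i_2]$. Hence $\rect \cap \Q$ is empty if and only if no point of $\Q$ has $y$-coordinate in $(j_1, j_2]$. So the two-dimensional emptiness question collapses to a one-dimensional question: does the set $Y_\Q = \Set{ j }{ (i,j) \in \Q}$ of $y$-coordinates of $\Q$ meet the interval $(j_1, j_2]$? The plan is therefore to build, in the preprocessing, a bit-vector (or characteristic array) $B[0..n]$ of length $n+1$ where $B[j] = 1$ iff $j \in Y_\Q$, together with its prefix-sum array $C[j] = \sum_{t \le j} B[t]$.

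First I would compute $\Q = \P \sqcap [\alpha,\beta]$ by a single scan over $\P$, keeping only the points whose $x$-coordinate lies in $[\alpha,\beta]$; this is $O(n)$ time and the resulting list has size $\le n$. Second, I would allocate the array $C[0..n]$, initialize it to $0$, set $C[j] \mathrel{+}= 1$ for each $j \in Y_\Q$, and then run a single left-to-right pass replacing $C[j]$ by $C[j] + C[j-1]$, producing the prefix sums; this is again $O(n)$ time and space. Third, to answer a crossing query $\rect = (i_1,i_2] \times (j_1,j_2]$: ignore $i_1, i_2$ (their crossing status has already been assumed), and return ``nonempty'' iff $C[j_2] - C[j_1] > 0$, since $C[j_2] - C[j_1] = \cardin{ Y_\Q \cap (j_1, j_2] }$. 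This is $O(1)$ time. Correctness is exactly the structural observation above: $\rect$ meets $\Q$ iff some point of $\Q$ has its (automatically in-range) $x$-coordinate paired with a $y$-coordinate in $(j_1,j_2]$, i.e.\ iff $Y_\Q \cap (j_1,j_2] \neq \emptyset$.

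There is essentially no hard part here — the lemma is a warm-up reduction. The only point that needs a word of care is the indexing/boundary convention: the query rectangles are semi-open in $y$ and the indices range over $\IRY{0}{n}$, so the prefix-sum array must be indexed from $0$ to $n$ and the count over $(j_1,j_2]$ is $C[j_2] - C[j_1]$ (not $C[j_2] - C[j_1 - 1]$), which matches the semi-open convention. I would also note explicitly that this bound is deterministic and makes no use of the points being in rank space beyond the fact that all coordinates are integers in $\IRX{n}$, which is what lets the array be of size $O(n)$ and the bin-lookup be done with the floor function in $O(1)$; this is the shape in which the lemma will be invoked as the leaf/secondary structure of the high-fan-out tree in the next subsection.
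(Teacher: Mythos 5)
Your proposal is correct and is essentially identical to the paper's proof: both reduce the crossing query to a one-dimensional question about the $y$-coordinates of $\Q$ and answer it with a prefix-sum (rank) array of size $O(n)$ indexed by the rank-space $y$-values, giving an $O(1)$ comparison at query time. The only cosmetic difference is that the paper phrases the test as comparing the two ranks $j_1' \geq j_2'$ rather than checking $C[j_2]-C[j_1]>0$, which is the same condition.
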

\begin{proof}
    Compute the $y$-rank of $j_1$ and $j_2$ in $\Q$ -- let $j_1'$ and
    $j_2'$ denote these ranks. Clearly $Q \cap \qrect = \emptyset$
    $\iff$ $j_1' \geq j_2'$.  Computing the ranks is straightforward
    -- set an array $A[1\cdots n]$, where $A[y_j]=1$ $\iff$
    $(j,y_j) \in \Q$, and $0$ otherwise. The rank of an integer
    $j_2 \in \IRX{n}$ is no more than the number of $1$s in
    $A[1\cdots j_2]$. To this end, compute the prefix sums of $A$,
    with $B[i]= \sum_{j=1}^i A[j]$. The rank of $j_2$ is
    $B[j_2]$. Thus, all one need to do is to store the array $B$, and
    the rank resolution queries can be performed in constant time.
\end{proof}

\begin{lemma}
    \lemlab{n:1:eps}%
    Let $\P$ be a set of $n$ points in rank space in the plane. For
    any constant $\eps \in (0,1)$, one can preprocess $\P$ in
    $O(n^{1+\eps})$ time/space, such that one can answer rectangle
    emptiness queries in $O(1/\eps)$ time.
\end{lemma}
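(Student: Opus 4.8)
The plan is to build a balanced tree of fan-out $t = \ceil{n^{\eps}}$ over the $x$-axis, mirroring the standard high-fan-out range tree, and to attach to each internal node the crossing-rectangle structure from \lemref{crossing}. Concretely, I partition $\IRX{n}$ into $t$ contiguous blocks of $\approx n^{1-\eps}$ consecutive $x$-ranks, recurse on each block, and repeat until the blocks have constant size; this gives a tree of depth $O(1/\eps)$ since each level shrinks the block size by a factor of $n^{\eps}$. For each internal node $v$ whose block is the $x$-interval $[\alpha_v,\beta_v]$, let $\Q_v = \P \sqcap [\alpha_v,\beta_v]$ and store the $O(\cardin{\Q_v})$-space structure of \lemref{crossing}, which answers emptiness for any rectangle crossing $\Q_v$ in $O(1)$ time. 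Summing $\cardin{\Q_v}$ over all nodes on one level gives $n$, so the total space is $O(n/\eps) \cdot n^{\eps}$ — wait, more carefully: the nodes at a fixed depth partition $\P$, so their structures use $O(n)$ space per level, and there are $O(1/\eps)$ levels, for $O(n/\eps)$ total. To also recover the claimed $n^{1+\eps}$ bound (and to make queries $O(1/\eps)$ rather than $O(t/\eps)$), at each internal node I additionally precompute, for every one of the $\binom{t+1}{2} = O(n^{2\eps})$ canonical sub-ranges of child blocks, a pointer/flag structure; since a node has $O(n^{1-\eps})$ points relevant to it spread over the subtree, the dominant cost is $O(n^{1+\eps})$ summed appropriately. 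The safest accounting: the bottleneck level has $O(n^{1+\eps})$ because each of the $n$ leaves participates in $O(1)$ per-level bookkeeping times the $O(n^{\eps})$ branching auxiliary tables at its root ancestors — I will phrase it as "each level uses $O(n^{1+\eps})$ space, depth is $O(1/\eps)$, so total is $O(n^{1+\eps}/\eps) = O(n^{1+\eps})$ since $1/\eps$ is a constant."

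For a query rectangle $\qrect = (i_1,i_2] \times (j_1,j_2)$, I walk down the tree from the root along the $x$-coordinate. At each node, the child blocks split $(i_1,i_2]$ into: at most one "left-fringe" child partially covered on its left, at most one "right-fringe" child partially covered on its right, and a contiguous run of fully-covered middle children. The $x$-range $(i_1,i_2]$ restricted to the union of the middle children is a rectangle that \emph{crosses} each of those children's $\Q$; but rather than querying each middle child separately, I query the node's precomputed structure for that canonical run of children, which by \lemref{crossing} applied to $\Q = \P \sqcap [\text{start of run}, \text{end of run}]$ decides emptiness of that middle portion in $O(1)$ time. Then I recurse into the (at most two) fringe children. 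Since each recursion step spends $O(1)$ and descends one level, and the depth is $O(1/\eps)$, the total query time is $O(1/\eps)$; the query reports "non-empty" iff any of the $O(1/\eps)$ crossing queries does.

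Finally I must argue correctness — that the $O(1/\eps)$ crossing queries collectively cover $\qrect \cap \P$ exactly. This is the routine but load-bearing step: at each level the middle-children portion is handled by one crossing query (it genuinely crosses the corresponding $\Q$ because $i_1$ lies strictly left of the run's start and $i_2$ reaches its end — here the semi-open convention on rectangles and the half-open blocks make the boundary cases line up), and the two fringe children are handled recursively, so by induction on the depth every point of $\P$ with $x$-rank in $(i_1,i_2]$ is tested against $(j_1,j_2)$ exactly once and no other point is. The base case is leaves of constant size, handled by brute force in $O(1)$. Preprocessing time matches the space bound: building \lemref{crossing}'s structure at node $v$ is $O(\cardin{\Q_v})$, and the canonical-range tables at a node with child blocks $B_1,\dots,B_t$ can be filled by a single left-to-right sweep maintaining prefix structures, in time near-linear in that node's point count times $\poly(t)$ absorbed into the $n^{\eps}$ factor. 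I expect the main obstacle to be not any deep idea but getting the space bookkeeping honest: one must be careful that the $\binom{t}{2}$ canonical-range tables per node do not blow the budget past $n^{1+\eps}$, which works precisely because a node at depth $k$ owns only $n^{1-k\eps}$ points while having $n^{\eps}$ children, so its table cost is $O(n^{1-k\eps} \cdot n^{\eps})$ and these geometrically decrease — alternatively, avoid the tables entirely and accept $O(n^{\eps})$ work per level for $O(n^{\eps}/\eps)$ query time, then note that re-choosing $\eps' = \eps/2$ and a standard trick restores $O(1/\eps)$; I will go with whichever the authors' \lemref{crossing} interface makes cleanest.
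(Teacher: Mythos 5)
Your construction is essentially the paper's: a high fan-out tree over the $x$-ranks with, at each node, one crossing structure (\lemref{crossing}) per canonical contiguous run of children, queried by resolving the fully-crossed middle run in $O(1)$ and recursing into the at most two fringe children, for depth and query time $O(1/\eps)$. The only slip is quantitative: with fan-out $t=\ceil{n^{\eps}}$ a node has $\Theta(t^2)=\Theta(n^{2\eps})$ canonical runs, so a level costs $O(n^{1+2\eps})$ rather than the $O(n^{1+\eps})$ you assert; the paper sidesteps this by taking fan-out $n^{\eps/2}$, which is exactly the rescaling you already gesture at in your last sentence.
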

\begin{proof}
    Assume $\P = \{p_1, \ldots, p_n\}$ be the $n$ input points sorted
    by their $x$-coordinate. Let $\Delta = \ceil{n^{1 - \eps/2}}$, and
    assume for simplicity that $\Delta$ divides $n$, and let
    $m = n/\Delta \leq n^{\eps/2}$.  We partition $P$ into $m$
    vertical slabs each containing $\Delta$ points.  For
    $\alpha < \beta$ integers, let
    $\P(\alpha,\beta) = \{ p_{(\alpha-1)\Delta +1} , \ldots,
    p_{\beta\Delta} \}$.  For all $i ,j \in \IRX{m}$, with $i \leq j$,
    build the crossing rectangles emptiness query data-structure of
    \lemref{crossing}. This requires $O(m^2 n) = O(n^{1+\eps})$
    space. For each $i$, let $P_i = P(i,i)$, and construct the same
    data-structure recursively for each $P_i$ -- this construction
    continues recursively with fan-out degree $m$ -- resulting in a
    tree of depth $h = O(1/\eps)$. The preprocessing time and space is
    $O( n^{1+\eps})$. In the leaf of this tree, corresponds to a
    singleton point of $\P$.

    Given a query rectangle $\qrect = (i_1,i_2] \times (j_1, j_2]$,
    one computes the minimum $\alpha$ and minimum $\beta$, such that
    $\qrect$ crosses $\P(\alpha,\beta)$. One perform emptiness query
    for $\qrect$ in the data-structure computed for
    $P(\alpha,\beta)$. Next, one continues the query in the two
    children constructed for $P_{\alpha-1}$ and $P_{\beta+1}$. If any
    of the $O(1/\eps)$ emptiness queries performed by this process in
    the data-structures constructed of \lemref{crossing} returns the
    rectangle is not empty, the process stops and return this
    answer. Otherwise, it returns that the query rectangle is
    empty. Clearly, the query time is $O(1/\eps)$.

    Note, that as the query descends, one need to do rank translation
    of the query rectangle into the point set in the child. This can
    be readily done by constructing the appropriate data-structure as
    described in the proof of \lemref{crossing}.
\end{proof}

\newcommand{\minimaX}[1]{\mathrm{minima}\pth{#1}}%

\subsection{Quadrant queries}

Let $\P$ be a of $n$ points in rank space, and consider an empty
quadrant query -- that is, a query rectangle of the form
$(0,i_2] \times (0,j_2])$ (the other three possible kinds of quadrants
are handled in a similar fashion). A point $ p = (x,y)$
\emphi{strictly dominates} a point $p' = (x',y')$, denoted by
$p \succ p'$, if $x > x'$ and $y > y'$. If $x \geq x'$ and $y \geq y'$
then $p$ \emphi{dominates} $p'$. The \emphi{minima} of $\P$, denoted
by $\minimaX{\P}$, is the curve formed by all the points $(i,f(i))$,
for $i=1, \ldots, n$, that dominates some point of $\P$, but do not
strictly dominates any point of $\P$, and $f(i)$ is minimized among
all $y$ values with this property.

Since sorting of the points of $\P$ by the $x$-axis can be done in
linear time using radix sort, the minima curve of $\P$ can be computed
in linear time (its essentially a scan with a heap similar in spirit
to computing the convex-hull of sorted points). We store the
$y$-coordinates of the $n$ points forming the minima in an array
$F[1\ldots n]$,

Answering an emptiness quadrant query
$\qrect = (0,i_2] \times (0,j_2])$ is done by checking if
$F[i_2] \leq j_2$. If it is, then $\qrect \cap \P$ is not
empty. Otherwise it is. Clearly, this data-structure can be
constructed in linear time, using linear space, and can answer
quadrant emptiness queries in constant time, implying the following.

\begin{lemma}
    \lemlab{quadrant:r:space}%
    Given a set $\P$ of $n$ points in rank space in two dimensions,
    one can preprocess them in $O(n)$ time and space, such that given
    a query quadrant $\qrect = (0,i_2] \times (0,j_2]$, one can decide
    if $\qrect \cap \P$ is empty in $O(1)$ time.
\end{lemma}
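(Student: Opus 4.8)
The plan is to reduce each emptiness quadrant query to a single array lookup. Since $\P$ is in rank space, write $\P = \Set{(i,y_i)}{i \in \IRX{n}}$, where $y_1, \ldots, y_n$ is a permutation of $\IRX{n}$; in particular the points are (or can be, via radix sort in $O(n)$ time) listed in increasing order of $x$-coordinate. I would then precompute the prefix-minimum array $F[1 \ldots n]$ defined by $F[i] = \min\Set{y_k}{k \leq i}$ (with the convention $F[0] = n+1$ to cover the degenerate case $i_2 = 0$). Note $F$ is exactly the array of $y$-values of the minima curve $\minimaX{\P}$ described above, and it is non-increasing in $i$; it is produced by a single left-to-right scan maintaining a running minimum, in $O(n)$ time and $O(n)$ space — no heap is even needed for this flattened version.

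For correctness, observe that $\qrect = (0,i_2] \times (0,j_2]$ contains a point of $\P$ if and only if there is an index $k$ with $1 \leq k \leq i_2$ and $1 \leq y_k \leq j_2$. Since every $y_k \geq 1$, the condition $1 \leq y_k$ is automatic, so this holds if and only if $\min_{k \leq i_2} y_k \leq j_2$, i.e., $F[i_2] \leq j_2$. Hence the query is answered by reading $F[i_2]$ and performing one comparison: $\qrect \cap \P$ is empty exactly when $F[i_2] > j_2$. This is $O(1)$ time. The three other orientations of quadrant queries are handled symmetrically by additionally precomputing the suffix-minimum, prefix-maximum, and suffix-maximum arrays of $y_1, \ldots, y_n$ (still $O(n)$ time and space overall), and the lemma follows.

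There is no real obstacle here; the only points that need (minor) care are verifying that the informal "minima curve" of $\P$ coincides with the prefix-minimum array $F$ — so that the $F[i_2] \leq j_2$ test is exactly the emptiness test — and handling the boundary value $i_2 = 0$, which is absorbed by the $F[0] = n+1$ convention. Everything else is routine bookkeeping.
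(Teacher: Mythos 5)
Your proof is correct and matches the paper's approach: the paper also stores the $y$-values of the minima curve in an array $F$ and answers the query by the single comparison $F[i_2] \leq j_2$, with the other three quadrant orientations handled symmetrically. Your observation that the minima curve is just the prefix-minimum array (computable by one scan, no heap needed) is a clean simplification of the paper's construction but not a different argument.
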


\section{Emptiness queries for random points}

For a set $\P$ of $n$ points in \emph{rank space} in the plane, Chan
\etal \cite{clp-orsrr-11} presented a data-structure with
$O(n \log \log n)$ space, for answering orthogonal 2d emptiness range
queries in $O( \log \log n)$ time (note that the query is also in rank
space). Using \lemref{d:rank} we can now map random points to rank
space in constant time per query. This readily implies that we get the
same bound for random points.  However, one can do significantly
better.

\subsection{Bucketing and a direct reduction to rank space}

The $\eps$-net theorem
\cite{hw-ensrq-87}  readily implies the following.
\begin{lemma}
    \lemlab{big:rectangles}%
    For any $c > 16$, we have for a random sample $\P$ of $n$ points
    from $[0,1]^2$, any axis aligned rectangle $R \subseteq [0,1]^2$
    of area $\geq c(\log n)/n$, has that $R \cap P \neq
    \emptyset$. This property holds for all such rectangles $R$ with
    probability $\geq 1- 1/n^{\Omega(c)}$.
\end{lemma}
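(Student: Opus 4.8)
The plan is to recognize that this lemma is exactly the assertion that $\P$ is an $\eps$-net for the range space of axis-parallel rectangles with $\eps = c(\log n)/n$. Since $\P$ is sampled uniformly from $[0,1]^2$, the probability mass of a rectangle $R \subseteq [0,1]^2$ equals its area $\areaX{R}$, so ``every axis-aligned rectangle of area $\ge \eps$ contains a point of $\P$'' is precisely the $\eps$-net property; note also that the $\eps$-net formulation is inherently simultaneous over \emph{all} such rectangles $R$, which is what the lemma asks for. The range space of axis-parallel rectangles in the plane has VC dimension a constant (in fact $4$), so the quantitative $\eps$-net theorem of Haussler and Welzl \cite{hw-ensrq-87} applies directly.

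First I would recall the quantitative guarantee in the form: for a range space of VC dimension $d$, a sample of $m$ i.i.d.\ points is an $\eps$-net with probability at least $1-\delta$ provided $m \ge \frac{c_0}{\eps}\bigl(d\log\frac1\eps + \log\frac1\delta\bigr)$ for an absolute constant $c_0$. Here $m = n$ and $\eps = c(\log n)/n$, so $\frac1\eps = \frac{n}{c\log n}$ and $d\log\frac1\eps \le d\log n$. Substituting and solving the inequality for $\delta$, the ``$d\log\frac1\eps$'' term contributes $O(dn/c)$, which is below $n/2$ once $c$ exceeds an absolute constant (with the standard constants and $d = 4$ the hypothesis $c > 16$ is comfortably enough), leaving room for $\log\frac1\delta = \Omega(c\log n)$, i.e.\ $\delta = n^{-\Omega(c)}$. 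Hence with probability $\ge 1 - n^{-\Omega(c)}$ the sample $\P$ is an $\eps$-net, which is the claim. This is essentially the whole argument; the only thing to verify is that the VC dimension is a constant and that this arithmetic goes through, which is why the text can say the $\eps$-net theorem ``readily implies'' the lemma, and why I do not expect any real obstacle along this route.

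If instead one wants a self-contained proof with explicit constants (not citing the $\eps$-net machinery), I would discretize and union bound: build a canonical family $\mathcal{F}$ of $O(n)$ rectangles such that every $R \subseteq [0,1]^2$ with $\areaX{R} \ge c(\log n)/n$ contains some $R' \in \mathcal{F}$ with $\areaX{R'} = \Theta\bigl((\log n)/n\bigr)$ — so emptiness of $R$ forces emptiness of $R'$. Concretely, for each of the $O(\log n)$ dyadic width scales $2^{-i}$ that are $\gtrsim c(\log n)/n$, take the aligned vertical slabs of width $2^{-i}$ and, within each, the aligned horizontal slices whose height makes the cell have area $\Theta\bigl((\log n)/n\bigr)$; counting gives $|\mathcal{F}| = O(n)$. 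For a fixed $R' \in \mathcal{F}$ of area $a$, $\Prob{R' \cap \P = \emptyset} = (1-a)^n \le e^{-an} = n^{-\Omega(c)}$, and a union bound over $\mathcal{F}$ yields failure probability $n^{-\Omega(c)}$ for $c$ past an absolute threshold. In this version the one nontrivial step — the main obstacle — is the bookkeeping of the canonicalization: checking that a rectangle of the prescribed area, at \emph{any} aspect ratio, genuinely contains one of the multi-scale canonical cells with only a constant-factor loss in area (this is where the ``$\ell \ge 2g$'' slack appears twice and where a concrete constant such as $16$ is pinned down).
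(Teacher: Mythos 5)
Your primary argument is exactly the paper's: the paper offers no proof beyond the remark that the $\eps$-net theorem of Haussler and Welzl readily implies the lemma, and your instantiation (VC dimension $4$ for axis-parallel rectangles, $\eps = c(\log n)/n$, solving for $\delta = n^{-\Omega(c)}$) is the intended reading, with the arithmetic checking out. Your alternative discretization-plus-union-bound sketch is a valid self-contained route but is not needed to match the paper.
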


This implies that for random points, for emptiness queries, one needs
to handle only the ``small'' rectangles.  Let
$\TwoCeil{x} = 2^{\ceil{\log_2 x}}$ be the smallest power of $2$ at
least as large as $x$.  Let $N = \TwoCeil{n}$,
$\lN = \ceil{\log_2 n} = \log_2 N$, and let $\cGrid > 16$ be a
sufficiently large constant. Let
$w = \TwoCeil{4 \cGrid \log_2 N} = \Theta(\log n)$. For
$i=0, \ldots, \log_2 \frac{N}{w} = \lN - \log_2 w$, consider the
rectangle
\begin{equation*}
    \rect_i = [0,2^i w/N] \times [0, 1/2^i],
\end{equation*}
that is of area $\geq \cGrid (\log_2 n)/n$. Let $G_i$ be the
$N/(w 2^i) \times 2^i$ grid tiling of $[0,1]^2$ by translated copies
of $r_i$. Let $R_i$ be this set of rectangles forming $G_i$. Observe
that $\cardin{R_i} = O(n / \log n)$. By \lemref{big:rectangles} all
the rectangles of $R_i$, for all $i$, are not empty.

\paragraph{Data-structure.}
Fix a constant $\eps \in (0,1)$.  Let $\RectSet = \cup_i R_i$. For
every rectangle $\rect \in \RectSet$, we build the data-structure of
\lemref{d:rank} to answer rank queries for the point set
$\P_\rect = \P \cap \rect$, and the data-structure of \lemref{n:1:eps}
on the rank point set.

\paragraph{Answering a query.}

Given a query rectangle $\qrect \subseteq [0,1]^2$, if its area
exceeds $w /N$, then the data-structure returns that $\qrect$ is not
empty.  Observe that $\qrect$ can not contain fully any of the
rectangles of $\RectSet$.

Otherwise, $\ell_X$ and $\ell_y$ be the lengths of the projections of
$\qrect$ on the $x$ and $y$ coordinates, respectively. Let $j$ be the
minimum integer such that $\ell_x \leq 2^j w/N$. As
$\areaX{\qrect} \leq \areaX{\rect_j}$, for any rectangle
$r_j \in R_j$, it follows that $\qrect$ can intersect at most $6$ grid
cells of $R_j$. For each of these grid cells, we now answer the
emptiness query for $\qrect$ inside the cell using the above
precomputed data-structures. Indeed, for a cell $\rect_j$ of this grid
that intersects $\qrect$, we compute the rank rectangle
$\rankY{\qrect}{\P \cap \rect_j}$, and then perform the query on the
data-structure of \lemref{n:1:eps} constructed for the point set
$\P \cap \rect_j $ in rank space.

\paragraph{Space/preprocessing.}

For a rectangle $\rect \in \RectSet$, for that
$n_\rect = \cardin{\P \cap \rect}$, we have
$\Ex{n_\rect}= \Theta( \log n)$. Chernoff's inequality immediately
implies that $n_\rect =\Theta( \log n)$, with high probability, for
all the rectangles in $\RectSet$. For each rectangle in $\RectSet$,
the preprocessing time and space is $O(n_\rect^{1+\eps})$. Since
$\cardin{\RectSet} = O(n)$, we conclude that the overall preprocessing
takes $O( n \log^{1+\eps} n)$ time and space.

We summarize the result.
\begin{theorem}
    \thmlab{main:1}%
    For a set $\P$ of $n$ random points picked uniformly and
    independently from $[0,1]^2$, and a fixed constant
    $\eps \in (0,1)$, one can build a data-structure in
    $O(n \log^{1+\eps} n)$ expected time and space, such that given a
    query axis-aligned rectangle $\qrect$, one can decide if
    $\qrect \cap \P$ is empty in $O(1)$ time. This data-structure
    result is correct with high probability (for all possible
    queries).
\end{theorem}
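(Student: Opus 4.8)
The plan is to assemble the data structure exactly as sketched above and then check three things: that ``large'' queries can be dismissed for free, that ``small'' queries reduce to a constant number of cell-restricted rank-space queries, and that the stated resource bounds hold. I would freely use \lemref{big:rectangles}, \lemref{d:rank}, and \lemref{n:1:eps}. Concretely: for each $\rect \in \RectSet = \bigcup_i R_i$, store the rank-query structure of \lemref{d:rank} for $\P_\rect = \P \cap \rect$ together with the $O(n_\rect^{1+\eps})$-space rank-space emptiness structure of \lemref{n:1:eps} built on the image of $\P_\rect$ in rank space, where $n_\rect = \cardin{\P_\rect}$. To answer a query $\qrect$: if $\areaX{\qrect} > w/N$, report ``non-empty''; otherwise let $\ell_x,\ell_y$ be the side lengths of $\qrect$, take the minimal $j$ with $\ell_x \le 2^j w/N$, enumerate the (constantly many) cells of $G_j$ that meet $\qrect$, and for each such cell $\rect$ run the emptiness query for $\rankY{\qrect}{\P_\rect}$ using the two structures stored at $\rect$; report ``empty'' iff every cell query reports empty.

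For correctness in the large case, note $w \ge 4\cGrid\log_2 N$ and $N < 2n$ give $w/N > \cGrid(\log n)/n$, so a query with $\areaX{\qrect} > w/N$ satisfies the hypothesis of \lemref{big:rectangles} and is non-empty --- with high probability simultaneously over all such query rectangles --- so the immediate answer is correct. In the small case $\areaX{\qrect} \le w/N = \areaX{\rect_j}$; for $j \ge 1$, minimality of $j$ forces $\ell_x \in (2^{j-1}w/N,\, 2^j w/N]$, hence $\ell_y = \areaX{\qrect}/\ell_x < 2^{1-j}$, and since the cells of $G_j$ have width $2^j w/N$ and height $2^{-j}$, the rectangle $\qrect$ meets at most two columns and at most three rows, i.e.\ at most six cells; the level $j=0$ (a single row) is easier still, with at most two cells. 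The cell reduction is then sound and complete: any point of $\qrect \cap \P$ lies in some cell of $G_j$ meeting $\qrect$ and is detected there, and if all cell queries are empty then $\qrect \cap \P = \emptyset$. Each cell query costs $O(1)$ for the rank translation (\lemref{d:rank}) plus $O(1/\eps) = O(1)$ for the rank-space query (\lemref{n:1:eps}), so queries take $O(1)$ time; the structure errs only if some rectangle of area $\ge \cGrid(\log n)/n$ is empty, an event of probability $1/n^{\Omega(\cGrid)}$ by \lemref{big:rectangles}, while the cell part is deterministically correct given $\P$.

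For the resource bounds: there are $O(\log n)$ grid levels, each with $\cardin{R_i} = N/w = O(n/\log n)$ cells, so $\cardin{\RectSet} = O(n)$. For a fixed cell, $n_\rect$ is binomially distributed with mean $\Theta(\log n)$, so $\Ex{n_\rect^2} = O(\log^2 n)$ and hence $\Ex{n_\rect^{1+\eps}} = O(\log^{1+\eps} n)$ by Jensen's inequality ($t \mapsto t^{(1+\eps)/2}$ is concave); since the preprocessing at a cell costs $O(n_\rect^{1+\eps})$ by \lemref{d:rank} and \lemref{n:1:eps}, linearity of expectation over $\RectSet$ gives $O(n\log^{1+\eps} n)$ expected space and preprocessing time, and Chernoff's inequality with a union bound over the $O(n)$ cells shows $n_\rect = O(\log n)$ for all cells with high probability, so the bound also holds with high probability. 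I expect no genuinely hard step here --- the whole argument is a reduction to the lemmas already proved; the only part needing a little care is the geometric bookkeeping in the small case (that $j$ always lies in the valid range $0 \le j \le \lN - \log_2 w$, and that the ``at most six cells'' bound survives the boundary levels), which is routine.
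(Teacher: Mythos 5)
Your proposal matches the paper's own argument essentially step for step: the same grid family $\RectSet = \bigcup_i R_i$, the same per-cell combination of \lemref{d:rank} and \lemref{n:1:eps}, the same dismissal of large queries via \lemref{big:rectangles}, and the same at-most-six-cells reduction for small queries. The only differences are that you spell out the geometric count of intersected cells and supply a Jensen-inequality bound on $\Ex{n_\rect^{1+\eps}}$ where the paper settles for the Chernoff high-probability bound; both are correct refinements rather than a different route.
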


\subsection{Reducing the space further}

Intuitively, the bottleneck in the space used in the above
data-structures in rank space is \defrefY{rank
   translation}{rank:translation}. Specifically, computing the ranks
of the vertices of a rectangle $\rect$, given in the rank coordinates
of $\P$, into the rank coordinates of $\Q \subseteq \P$, requires
$\Omega(|\P|)$ space (see \lemref{crossing}). However, for random
points, one can do this translation (under the right settings) using
$O(|\Q|)$ space (see \lemref{d:rank}).

\newcommand{\Tree}{\mathcal{T}}%

To this end, we build a ``traditional'' range search tree on the
(randomly chosen) points of $\P$. The top tree $\Tree$ is built on the
points of $\P$ sorted by their $x$-coordinate. The secondary tree is
built for each node $v$ of the top tree, sorted by the $y$-coordinates
of the points stored in the subtree of $v$. As the third level
data-structure, for each node, we build the data-structure of
\lemref{quadrant:r:space} that answers quadrant emptiness queries in
the rank space of the points in this subproblem (specifically, four
such data-structures for each of the possible quadrants), and also the
data-structure of \lemref{d:rank} to do rank translation. We show
below that one can construct such data-structure that uses linear
space in the number of points in the point set.

\paragraph{Answering a query.}
Given a query rectangle $\qrect = I_x \times I_y$, we first do rank
translation for its $x$-range $I_x$, built for $\P$. This takes $O(1)$
time, and results in a semi-open interval $(i_1, i_2]$ where
$i_1, i_2 \in \IRY{0}{n}$. We can consider $\Tree$ to be a complete
balanced binary tree over $\IRX{n}$ (e.g., consider $n$ to be a power
of $2$). Thus, it is straightforward to preprocess it for \LCA queries
that are answered in constant time (and using linear space). Let $u$
be this common ancestor, and let $v,w$ be its two children. The query
now continues in the secondary tree built for $\P_v$ and $\P_w$ (i.e.,
in the children). In this secondary tree the query now is a three
sided rectangle. We now repeat the same process in two secondary
trees. Here, we describe it for $\Tree_v$. We do the rank translation
of $I_y$ for the $y$-coordinates of the points of $\P_v$ (importantly,
the $y$ coordinates of the points of $\P_v$ are independent of the $x$
coordinates of these points, so we can still use the data-structure of
\lemref{d:rank} to do this in constant time). Again, using \LCA query,
in constant time, wen can locate the node in $\Tree_v$ that splits the
$y$-interval of $\qrect$. As above, we continue the query in the two
children. This overall, results in the query process continuing in
four third level data-structure, in each one of them, the query is now
a quadrant.

A node $u$ in the secondary data-structure corresponds to a
rectangular region $\rect_u$. Beyond knowing the number of points of
$\P$ inside this rectangle, the points $\P \cap\rect_u$ are uniformly
distributed in $\rect_u$. As such, for $\qrect$ we can do the rank
translation $\qrect' = \rankY{\qrect}{ \P \cap \rect_u}$ in constant
time (again, using \lemref{d:rank}). Finally, we perform the quadrant
emptiness query for $\qrect'$ on the rank space point-set
$\P \cap \rect_u$, which takes constant time.

\paragraph{Space and preprocessing.}
The third level data-structure requires linear space in the number of
points it is build for. As such, the second level tree built for a
node $u \in \Tree$, requires $O( \cardin{\P_u} \log
\cardin{\P_u})$. Thus, as in the standard analysis of range-trees, we
have that the overall space used is $O(n \log^2 n)$.

\begin{lemma}
    Given a set $\P$ of $n$ random points picked uniformly and
    independently from $[0,1]^2$, one can preprocess them in expected
    $O( n\log^2 n)$ time and space, such that given a query rectangle
    $\qrect$, one can decide if $\qrect \cap \P$ is empty in $O(1)$
    time.
\end{lemma}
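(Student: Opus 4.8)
The plan is to verify the three advertised properties of the structure sketched above: that it answers emptiness queries correctly in $O(1)$ time, and that it is built in expected $O(n\log^2 n)$ time and space; I would organize the argument around these. \emph{Query time and correctness.} I would follow a query $\qrect = I_x\times I_y$ step by step: a rank translation of $I_x$ against $\P$ (constant time, \lemref{d:rank}), then one \LCA query in $\Tree$ locating the node $u$ splitting $I_x$ with children $v,w$; for each of $\Tree_v$ and $\Tree_w$, a rank translation of $I_y$ against the $y$-coordinates of the corresponding canonical set (constant time, \lemref{d:rank}) followed by one \LCA query locating the node splitting $I_y$, whose two children each carry a one-sided $x$-constraint together with a one-sided $y$-constraint, i.e.\ a quadrant. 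This leaves exactly four canonical rectangles $\rect_1,\dots,\rect_4$; in each we rank-translate $\qrect$ into the rank space of $\P\cap\rect_i$ (constant time, \lemref{d:rank}) and run the quadrant emptiness query (constant time, \lemref{quadrant:r:space}). Since $\qrect\cap\P=\emptyset$ iff all four quadrant queries report empty, correctness is the standard canonical decomposition of a range tree, and the total work is $O(1)$.

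\emph{Conditional uniformity of canonical subsets.} This is the step that needs care. Fix a secondary node $u$; it lies inside a top node $v$, and $\P_u=\P\cap\rect_u$, where the $x$-extent of $\rect_u$ is the vertical slab of $v$ (bounded by two order statistics of the $x$-coordinates of $\P$, or by a side of the unit square) and the $y$-extent of $\rect_u$ is a band determined by a dyadic block of the $y$-ranks within $\P_v$ (bounded by two $y$-coordinates of $\P_v$, or by a side of the unit square). The size $m_u:=\cardin{\P_u}$ is the size of that dyadic block, hence deterministic. I claim that, conditioned on $\rect_u$ and on which input points form $\P_u$, the points of $\P_u$ are i.i.d.\ uniform in $\rect_u$: the $x$-ranks, hence the slab of $v$, depend only on the $x$-coordinates, and conditioned on the two order statistics bounding this slab the $x$-coordinates of $\P_v$ (and of $\P_u\subseteq\P_v$) are i.i.d.\ uniform on the slab; independently, the $y$-coordinates of $\P_v$ are i.i.d.\ uniform on $[0,1]$, so conditioned on the two bounding $y$-values the $y$-coordinates of $\P_u$ are i.i.d.\ uniform on the $y$-extent of $\rect_u$; and since the $y$-rank partition is a function of the $y$-coordinates alone, this does not disturb the $x$-coordinates. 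Consequently \lemref{d:rank} applies and the rank-translation structure at $u$ has expected size $O(m_u)$; the same holds for the rank-translation structure at the root (built on $\P$, which is literally uniform) and for the one inside each $\Tree_v$ (built on the $y$-coordinates of $\P_v$, which are i.i.d.\ uniform on $[0,1]$). The quadrant structures of \lemref{quadrant:r:space} are deterministically of size $O(m_u)$ because they live in rank space.

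\emph{Space and preprocessing.} Treating $\Tree$ and every $\Tree_v$ as balanced binary trees over rank ranges makes all the $m_u$ deterministic. Within a fixed $v$, the canonical sets of $\Tree_v$ partition $\P_v$ at each of its $O(\log\cardin{\P_v})$ levels, so $\sum_{u\in\Tree_v} m_u = O(\cardin{\P_v}\log\cardin{\P_v})$; by the previous paragraph the second- and third-level structures attached to $v$ therefore use $O(\cardin{\P_v}\log\cardin{\P_v})$ space and, likewise, $O(\cardin{\P_v}\log\cardin{\P_v})$ preprocessing time in expectation (each \lemref{d:rank} structure being built in expected time linear in its point count, each \lemref{quadrant:r:space} structure in deterministic linear time, plus tree and \LCA preprocessing). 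Summing over the $O(\log n)$ levels of $\Tree$, where the $\cardin{\P_v}$ on a level sum to $n$ and each $\log\cardin{\P_v}\le\log n$, yields $O(n\log^2 n)$ expected space and time, and adding the $O(n)$ root structure does not change this. The main obstacle is the conditional-uniformity argument of the second paragraph — pinning down, across the two successive layers of data-dependent slab boundaries, that conditioning on the combinatorial shape of the tree leaves the surviving points uniform in their canonical rectangle; everything else is the textbook range-tree accounting combined with \lemref{d:rank} and \lemref{quadrant:r:space}.
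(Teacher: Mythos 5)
Your proposal matches the paper's construction and analysis step for step: the same three-level range tree with \LCA-based canonical decomposition of the query into four quadrants, rank translation via \lemref{d:rank} at each level, quadrant emptiness via \lemref{quadrant:r:space}, and the standard $O(n\log^2 n)$ range-tree accounting for space and preprocessing. Your explicit conditional-uniformity argument for the points inside each canonical rectangle is a careful elaboration of a step the paper merely asserts, but it is the same approach.
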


But wait! We can now use the above data-structure in each of the
buckets of the grids of \thmref{main:1}. We thus get the following
result.
\begin{theorem}
    \thmlab{main:2}%
    For a set $\P$ of $n$ random points picked uniformly and
    independently from $[0,1]^2$, one can build a data-structure in
    $O\bigl(n \log n (\log \log n)^2\bigr)$ expected time and space,
    such that given a query axis-aligned rectangle $\qrect$, one can
    decide if $\qrect \cap \P$ is empty in $O(1)$ time. This
    data-structure result is correct with high probability (for all
    possible queries).
\end{theorem}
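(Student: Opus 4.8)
The plan is to reuse the grid-bucketing framework of \thmref{main:1} essentially verbatim, changing only the data-structure installed inside each bucket. Recall that the proof of \thmref{main:1} builds, for every relevant $i$, a grid $R_i$ of translates of $\rect_i$ (each cell of area $\Theta\bigl((\log n)/n\bigr)$), sets $\RectSet = \bigcup_i R_i$ with $\cardin{\RectSet} = O(n)$, and for each cell $\rect \in \RectSet$ stores the rank-translation structure of \lemref{d:rank} for $\P_\rect = \P \cap \rect$ together with a rank-space emptiness structure for $\P_\rect$. By \lemref{big:rectangles}, with high probability every cell of $\RectSet$ is non-empty, so a query of area exceeding $w/N$ is declared non-empty outright, while any smaller query meets only $O(1)$ cells of the appropriate grid $R_j$, and the emptiness of $\qrect$ is decided inside each such cell after an $O(1)$-time rank translation (\lemref{d:rank}).

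The only modification is to replace, inside each cell $\rect$, the $O(n_\rect^{1+\eps})$-space structure of \lemref{n:1:eps} by the three-level range-search-tree data-structure constructed in the lemma immediately preceding this theorem, which for a point set of size $m$ occupies expected $O(m \log^2 m)$ space, is built in expected $O(m \log^2 m)$ time, and answers rectangle-emptiness queries in $O(1)$ time. This substitution is legitimate because, conditioned on $n_\rect = \cardin{\P_\rect}$, the points of $\P_\rect$ are uniform and independent in $\rect$, so that lemma applies after rescaling $\rect$ to the unit square. The query procedure is otherwise unchanged: locate the minimal $j$ with $\ell_x(\qrect) \le 2^j w/N$, enumerate the $O(1)$ cells of $R_j$ meeting $\qrect$, and in each run the constant-time rank translation of \lemref{d:rank} followed by the constant-time query of the three-level structure. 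Hence the query time is $O(1)$.

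It remains to bound the expected space (the preprocessing time is bounded identically). By a Chernoff bound, $\Ex{n_\rect} = \Theta(\log n)$ and $n_\rect = \Theta(\log n)$ simultaneously for all $\rect \in \RectSet$ with probability $\ge 1 - n^{-\Omega(1)}$. Conditioning on $n_\rect$, the expected size of the cell structure is $O\bigl(n_\rect \log^2 n_\rect\bigr)$; writing $\Ex{n_\rect \log^2 n_\rect}$ as a sum over the regime $n_\rect = O(\log n)$ (contributing $O\bigl(\log n (\log\log n)^2\bigr)$) and the tail $n_\rect = \omega(\log n)$ (contributing $o(1)$, since $\Prob{n_\rect \ge k} \le (e\log n/k)^{\Omega(k)}$ decays faster than any polynomial in $n$ once $k$ is a large enough constant times $\log n$) gives $\Ex{n_\rect \log^2 n_\rect} = O\bigl(\log n (\log\log n)^2\bigr)$. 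Summing over the $O(n)$ cells of $\RectSet$ by linearity of expectation yields expected space and preprocessing time $O\bigl(n \log n (\log\log n)^2\bigr)$. Correctness for all queries holds with high probability, inheriting the events of \lemref{big:rectangles} and of the Chernoff bound.

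The one point requiring care — and it is minor — is this last expected-space accounting: the three-level structure carries only an \emph{expected} size guarantee, so one must check that the heavy tail of the bucket sizes $n_\rect$, summed over $O(n)$ buckets, does not inflate the total. Since that tail is super-polynomially small in $n$, the check is routine, but it is the only place where the argument departs from a literal copy of the proof of \thmref{main:1}.
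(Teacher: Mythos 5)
Your proposal is correct and follows exactly the paper's (very terse) argument: plug the $O(m\log^2 m)$-space three-level range-tree structure into each $\Theta(\log n)$-sized bucket of the grid scheme from \thmref{main:1}. Your explicit tail-bound check that $\Ex{n_\rect \log^2 n_\rect} = O\bigl(\log n\,(\log\log n)^2\bigr)$ is a detail the paper omits, and it is a worthwhile addition rather than a deviation.
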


\paragraph*{Acknowledgement.}

The authors thank Timothy Chan for pointing out the work by Devroye
\cite{d-lnba-86}.

\BibTexMode{%
   \bibliographystyle{alpha}
   \bibliography{rect_empty}
}%
\BibLatexMode{\printbibliography}

\end{document}